\newtheorem{thm}{Theorem}[section]
\newtheorem{lem}{Lemma}[section]
\newtheorem{cor}{Corollary}[section]
\newtheorem{rem}{Remark}[section]
\theoremstyle{definition}
\newtheorem*{Def*}{Definition}
\newtheorem*{rems*}{Remarks}
\numberwithin{equation}{section}
\newcommand{\D}{\displaystyle}
\newcommand{\be}{\begin{equation}}
\newcommand{\ee}{\end{equation}}
\newcommand{\ii}{\textup{i}}
\newcommand{\la}{\lambda}
\DeclareMathOperator{\res}{res}
\begin{document}

\title[Finite-gap solutions of the Sine-Gordon equation]{Finite-gap solutions of the Sine-Gordon equation
}

\author{ V.P. Kotlyarov}
\address{Institute for Low Temperature Physics\\ 47,Lenin ave\\ 61103 Kharkiv\\ Ukraine}
\email{kotlyarov@ilt.kharkov.ua}

\begin{abstract}
This paper contains first results on the finite-gap integration of  the Sine-Gordon equation. They were published on Russian  in 1976.
The papers \cite{Koz}, \cite{KK}, \cite{KK02} have been rewritten in the English language with small modifications for a convenience. Such a translation was made due to requests of some interested readers.

In those papers, the method of constructing of the finite-gap solutions of the equation $u_{tt}-u_{xx}+\sin u=0$ was proposed. The explicit formulae were obtained for these solutions. The formulae are constructed in terms of $\theta$-functions and they are analogous to the formulae obtained by A.R.Its and V.B.Matveev \cite{IM}, B.A.Dubrovin and S.P.Novikov \cite{DN} for periodic and almost periodic solutions to the Korteweg de Vries equation.
\end{abstract}


\maketitle


{\it \hfill This reprint on English language is dedicated  to the memory of V.A. Kozel}
\section{Introduction}

V.A.Marchenko \cite{M} and S.P.Novikov \cite{N} have developed
two different approaches  to solving of the periodic problem for the Korteweg de Vries equation. Their results have generated a series of papers devoted to the periodic problem (see review \cite{DMN}). In particular, papers written by A.R.Its and V.B.Matveev \cite{IM}, B.A.Dubrovin and S.P.Novikov \cite{DN}, A.R.Its \cite{I} have presented explicit formulae for the so called finite-gap (periodic and almost periodic) solutions of KdV equation and nonlinear Schr\"odinger equation. These formulae were obtained  using  results and ideas of N.I. Akhiezer \cite{A} relating to inverse problems for the Hill operator. In all these papers an essential role have played the self-adjointness of corresponding inverse problems.

A lot of interesting problems for nonlinear equations are connected with not selfadjoint Lax operators. Unfortunately, the solvability of  a periodic inverse problem  is unknown for not selfadjoint operators. Therefore in the not selfadjoint case, we use another approach where not using spectral theory of the corresponding operators. This approach uses the method from the paper  \cite{M}.  This method allows to describe finite-gap solutions with the help of autonomous nonlinear ODEs (sf.\cite{Koz}, \cite{K}). On another hand, our approach uses ideas of   papers \cite{IM}, \cite{D} for an integration of such type autonomous nonlinear ODEs as well as deducing the explicit formulae.

In this paper we give complete results obtained by this way for the Sine-Gordon equation:
\be\label{SG}
u_{tt}-u_{xx}+\sin u=0.
\ee
The main results are published in \cite{Koz}, \cite{KK}. The same approach have been   used in \cite{K}, \cite{IK} for nonlinear Schr\"odinger equation. It is necessary to emphasize that  I.M. Krichever \cite{Kri1}, \cite{Kri2} has proposed an algebraic geometry method for a construction finite-gap solutions of those nonlinear equations which appeared in the scheme of V.E. Zakharov and A.B. Shabat  \cite{ZSh}.  For equation (\ref{SG}) a modification of the Krichever method was done by A.R. Its. Derived by A.R.Its  formulae are coincided with our formulae deduced earlier in \cite{KK}. The authors bring  their appreciation to A.R.Its and I.M.Krichever which have provided us the opportunity to know the results before their publication.

\section{Finite-gap potentials}

As it is  known \cite{ZTF}, the inverse scattering method is associated with a pair of the Lax linear operators $\hat L$ and $\hat M$ such that the formal equality $\hat L_t=[\hat L,\hat M]=\hat L\hat M-\hat M\hat L$ generates  a nonlinear equation.  In the case of the Sine-Gordone equation  (\ref{SG}) operator $\hat L$ has the form:
\be\label{L}
\hat L=I\frac{d}{dx}+V(x,t), \qquad I=\begin{pmatrix}J&0\\0&0\end{pmatrix},\qquad
V+\begin{pmatrix}A&C\\C&0\end{pmatrix}
\ee
$$
J=\begin{pmatrix}
    0 & -1 \\
    1 & 0 \\
  \end{pmatrix}, \qquad A=\frac{\ii}{4}\begin{pmatrix}
                                         0 & w \\
                                         w & 0 \\
                                       \end{pmatrix}, \qquad C=\frac{1}{4}\begin{pmatrix}
                                         e^{\ii v/2} & 0 \\
                                         0 & e^{-\ii v/2} \\
                                       \end{pmatrix},
$$
where $v=v(x,t)$ and $w=w(x,t)$ are arbitrary complex-valued functions.

Let four dimensional vector function $\varphi(x)$ is a solution of equation $L\varphi=\la\varphi$ with an arbitrary complex number $\la\neq0$. Here and below in this section potential $V$ depends on $x$ only. In view of the degeneration of operator $L$ this equation is equivalent to the following one:
\be\label{L1}
L\psi:=J\psi^\prime+A\psi+\frac{1}{\la}H\psi=\la\psi, \qquad \la\neq0,
\ee
where $H=C^2$, and $\psi=\psi(x)$ is two dimensional vector function. If we put
$B:=\la J-JA-\frac{1}{\la} JH$, then equation (\ref{L1}) takes the form:
\be\label{B}
\psi^\prime=-B\psi
\ee

\begin{Def*}\emph{ Matrix valued function $V(x)$ is called a finite-gap potential of the operator $L$, if the matrix equation
\be\label{def}
\Psi^\prime_x=[\Psi, B], \qquad [\Psi, B]=\Psi B-B\Psi
\ee
has a nontrivial solution such that $\Psi$ is a matrix polynomial with the zero trace ($\rm{Sp}\Psi=0$)}.
\end{Def*}

Such a  definition   does not use  the periodicity of matrix $V(x)$.
In the periodic case it is equivalent to the definition from the paper \cite{K} for the Dirac operator. It is also equivalent  to the necessary and sufficient conditions for the Sturm-Liouville operator to be finite-gap \cite{IM}, i.e. when the absolutely continuous spectrum of the operator consists from finite number of closed  intervals on the real axis.

To understand this definition let us use results from \cite{Koz}. Let $v(x,y,t)$ and $w(x,y,t)$ be smooth complex-valued functions for all $x,y,t\in\mathbb{R}$. Let vector $\psi=\psi(x,\la,y,t)$ be a solution of  equation (\ref{L1}) with the  matrices $A$ and $H$  given by $w:=w(x,y,t)$ and $v:=v(x,y,t)$, and $\la$ is an arbitrary nonzero complex number. In connection with  equation (\ref{L1}) we introduce a set of operators:
\be\label{M}
M_k=D_k+B_k(x,\la,y,t), \quad k=1,2\qquad B_k=\begin{pmatrix}
                                                \ii\frac{w}{4} & -\la+\frac{(-1)^k}{16\la}e^{-\ii v} \\
                                                \la-\frac{(-1)^k}{16\la}e^{\ii v} & -\ii\frac{w}{4} \\
                                              \end{pmatrix},
\ee
where $D_1$ and $D_2$ are differentiations  with respect to  $t$ and $y$. Then it is easy to verify that the following equality
\be\label{LMpsi}
LM_k\psi-\la M_k\psi=(S_k+\frac{1}{\la} V_k)\psi
\ee
is fulfilled for any solution $\psi$ of the equation (\ref{L1}).
The matrices $S_k$ and $V_k$ have the form:
\begin{align*}
S_1=&-D_1 A+A^\prime_x+2(HJ-JH),\\
V_1=&-D_1 H-H^\prime_x-2(HJA-AJH),\\
S_2=&-D_2 A+A^\prime_x,\\
V_2=&-D_2 H+H^\prime_x.
\end{align*}
Equality (\ref{LMpsi}) together with (\ref{L}) show that the operator $M_1$ (and $M_2$) transforms the solution $\psi$ of equation (\ref{L1}) into a solution of the same equation (\ref{L}) iff the functions $v(x,y,t)$ and $w(x,y,t)$ satisfy the system of equations:
\be\label{uwSG}
w_t^\prime-w_x^\prime+\sin v=0, \qquad v_t^\prime+v_x^\prime-w=0
\ee
(and
\be\label{uwshift}
w_x^\prime-w_y^\prime=0,\qquad v_x^\prime-v_y^\prime=0.)
\ee
The system (\ref{uwSG}) is equivalent to the sine-Gordon equation (\ref{SG}) for the function $v(x,y,t)$, where $y$ is a parameter while the equations (\ref{uwshift}) give that
$$
v(x,y,t)=v_0(x+y,t),\qquad w(x,y,t)=w_0(x+y,t),
$$
where $v_0$ and $w_0$ are some smooth functions.
On the other hand, if $u(x,t)$ ($x,t\in\mathbb{R}$) satisfies (\ref{SG}) then the function $u(x+y,t)$ solves equation (\ref{SG}) for all $y\in\mathbb{R}$. Let us put
\be\label{v=u}
v(x,y,t)=u(x+y,t),\qquad w(x,y,t)=u_t^\prime(x+y,t)+u_x^\prime(x+y,t)
\ee
This pair of functions satisfy  (\ref{uwSG}), (\ref{uwshift}) simultaneously. Therefore, the both of operators $M_1$, $M_2$ transform the solution of  (\ref{L1}) into a solution of the same equation (\ref{L1}).

Introduce now the fundamental matrix of equation (\ref{L1}) in  such
a way that
$$
\Psi(x,y,t, \la)=\begin{pmatrix}
                   \psi_{11}(x,y,t, \la) & \psi_{12}(x,y,t, \la) \\
                   \psi_{21}(x,y,t, \la)& \psi_{22}(x,y,t, \la) \\
                 \end{pmatrix}, \qquad \Psi (x,y,t, \la)\vert_{x=0}\equiv I
$$
\begin{thm}
Let $u(x,t)$ ($x,t\in\mathbb{R}$) be a solution of (\ref{SG}).Then, for all $x,y,t\in\mathbb{R}$, the fundamental matrix of equation (\ref{L1}) satisfies the following equations:
\be\label{DkPsi}
D_k\Psi=\Psi A_k(y,t,\la)- A_k(x+y,t,\la)\Psi, \qquad k=1,2,
\ee
where $ A_k(x+y,t,\la)= B_k(x,y,t,\la)=B_k(0,x+y,t,\la)$ are defined by (\ref{M}) with $v$ and $w$ from (\ref{v=u}).
\end{thm}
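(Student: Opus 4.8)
The plan is to exploit two facts: the columns of the fundamental matrix $\Psi$ span the two-dimensional solution space of equation (\ref{L1}), and — as is already encoded in the intertwining relation (\ref{LMpsi}) — each operator $M_k$ carries solutions of (\ref{L1}) to solutions of (\ref{L1}) once the underlying functions solve sine--Gordon. From these two facts the theorem should follow by the standard principle that two fundamental systems of the same linear $x$-equation differ by a factor constant in $x$, whose value is then pinned down by the normalization at $x=0$.

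First I would record the compatibility input. Because $u$ solves (\ref{SG}), the substitution (\ref{v=u}) makes $v$ and $w$ satisfy both (\ref{uwSG}) and (\ref{uwshift}) simultaneously; hence in (\ref{LMpsi}) the matrices $S_k$ and $V_k$ both vanish and the right-hand side is identically zero. Consequently $M_k\psi=D_k\psi+B_k\psi$ solves $L\phi=\lambda\phi$ whenever $\psi$ does, for $k=1,2$. Applying this columnwise to $\Psi$, the matrix $W_k:=D_k\Psi+B_k\Psi$ has columns that again solve (\ref{L1}), which by (\ref{B}) is the same as saying $\partial_x W_k=-B\,W_k$.

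Next I would compare $W_k$ with $\Psi$. Since $\Psi$ obeys the same matrix equation $\partial_x\Psi=-B\Psi$ and is invertible (being a fundamental matrix), a one-line computation using $\partial_x\Psi^{-1}=-\Psi^{-1}(\partial_x\Psi)\Psi^{-1}$ gives $\partial_x(\Psi^{-1}W_k)=\Psi^{-1}B W_k-\Psi^{-1}B W_k=0$. Therefore $W_k=\Psi\,G_k$ with a matrix $G_k=G_k(y,t,\lambda)$ independent of $x$. To identify $G_k$ I would evaluate at $x=0$: the normalization $\Psi|_{x=0}\equiv I$ holds for all $y,t$, so $(D_k\Psi)|_{x=0}=D_k(I)=0$, while $B_k|_{x=0}=B_k(0,y,t,\lambda)=A_k(y,t,\lambda)$ by the definition in the statement. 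Hence $G_k=A_k(y,t,\lambda)$, and substituting $B_k(x,y,t,\lambda)=A_k(x+y,t,\lambda)$ into $D_k\Psi+B_k\Psi=\Psi\,G_k$ yields exactly (\ref{DkPsi}).

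The one genuinely delicate point is the vanishing of the right-hand side of (\ref{LMpsi}): it requires that (\ref{v=u}) solve the shift equations (\ref{uwshift}) as well as (\ref{uwSG}), so that $S_k$ and $V_k$ are \emph{separately} zero rather than merely their $\lambda$-combination; this is where the sine--Gordon equation and the special choice of $w$ in (\ref{v=u}) enter essentially. Once this compatibility is secured, the rest is the routine fundamental-matrix argument above, and the only care needed is the interchange of $D_k=\partial_{t},\partial_{y}$ with the $x=0$ evaluation, which is immediate because these are partial derivatives in independent variables.
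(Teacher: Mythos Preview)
Your proof is correct and follows essentially the same route as the paper: both arguments use that $M_k$ sends solutions of (\ref{L1}) to solutions once sine--Gordon holds, then invoke the fundamental-matrix principle (any solution equals $\Psi$ times its value at $x=0$) and the normalization $\Psi|_{x=0}\equiv I$ to identify the $x$-independent factor as $B_k(0,y,t,\lambda)=A_k(y,t,\lambda)$. Your version just makes the intermediate step $\partial_x(\Psi^{-1}W_k)=0$ explicit, whereas the paper phrases it as $M_k\Psi\psi_0=\Psi\cdot\{M_k\Psi\psi_0\}|_{x=0}$ for arbitrary $\psi_0$.
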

\begin{proof}
Any solution of equation (\ref{L1}) is represented as $\Psi(x,y,t,\la)\psi_0$, where $\psi_0$ is an  initial condition. As $u$ satisfies equation (\ref{SG}), then  $M_k$ transforms the solution of  equation(\ref{L1}) into another one's. Hence vectors $M_k\Psi(x,y,t,\la)\psi_0$ ($k=1,2$) is also a solution of the same equation and
$$
M_k\Psi(x,y,t,\la)\psi_0=\Psi(x,y,t,\la)\{M_k\Psi(x,y,t,\la)\psi_0\}\vert_{x=0}.
$$
Since $\{M_k\Psi(x,y,t,\la)\psi_0\}\vert_{x=0}=\{M_k\}\vert_{x=0}\Psi(0,y,t,\la)\psi_0=B_k(0,y,t,\la)\psi_0$
then $\{M_k\Psi-\Psi B_k(0,y,t,\la) \}\psi_0=0$ and, due to the arbitrariness of $\psi_0$, one find that
$$
M_k\Psi(x,y,t,\la)-\Psi(x,y,t,\la) B_k(0,y,t,\la)=0.
$$
The last equality  is the matrix equation (\ref{DkPsi}).
\end{proof}

\begin{cor}
Let $u(x,t)$  ($u(x+l,t)=u(x,t)$)  is a periodic solution of (\ref{SG}). Then equations (\ref{B}), (\ref{v=u}), (\ref{DkPsi}) give $A_k(y+l,t,\la)=A_k(y,t,\la)$ and
\be\label{mon}
D_k\hat\Psi=\hat\Psi A_k(y,t,\la)- A_k(y,t,\la)\hat\Psi,\qquad k=1,2,
\ee
where $\hat\Psi(y,t,\la):=\Psi(l,y,t,\la)$ is the monodromy matrix of equation (\ref{L1}). Moreover, if we put $y=x$ and $k=2$ then equation (\ref{mon}) coincides with equation (\ref{def}) which defines finite-gap potentials.
\end{cor}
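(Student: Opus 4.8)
The plan is to read off all three assertions of the corollary directly from the preceding theorem together with the explicit shape of the matrices in (\ref{M}), invoking the periodicity of $u$ at exactly one point. First I would establish $A_k(y+l,t,\la)=A_k(y,t,\la)$. By the identifications in the theorem, the matrix $A_k$ is built, through (\ref{M}) and (\ref{v=u}), solely from the values $v=u(\cdot,t)$ and $w=u_t'(\cdot,t)+u_x'(\cdot,t)$ at the spatial point that is its first argument. Differentiating $u(x+l,t)=u(x,t)$ in $x$ and in $t$ shows that $u_x'$ and $u_t'$ are themselves $l$-periodic in the spatial variable, so $v$, $w$, and hence $A_k$, are $l$-periodic. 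This is the first claim.

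Next I would specialize (\ref{DkPsi}) to $x=l$. Since $D_k$ is differentiation in $t$ or in $y$, it commutes with fixing $x=l$, so the left-hand side becomes $D_k\hat\Psi$ with $\hat\Psi(y,t,\la)=\Psi(l,y,t,\la)$. On the right the first term is $\hat\Psi\,A_k(y,t,\la)$, while the second carries $A_k(l+y,t,\la)$, which by the periodicity just proved equals $A_k(y,t,\la)$. Collecting terms yields exactly (\ref{mon}).

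Finally I would put $y=x$ and $k=2$. After renaming $y$ to $x$ the operator $D_2=\partial_y$ becomes $\partial_x$, and (\ref{mon}) reads $\partial_x\hat\Psi=\hat\Psi A_2-A_2\hat\Psi$ with $A_2(x,t,\la)=B_2(0,x,t,\la)$ (the case $x=0$ of the identification in the theorem). The single computation to carry out is that this matrix coincides with $B=\la J-JA-\tfrac{1}{\la}JH$ from (\ref{B}): inserting the explicit $J$, $A=\tfrac{\ii}{4}\left(\begin{smallmatrix}0&w\\ w&0\end{smallmatrix}\right)$ and $H=C^2=\tfrac{1}{16}\left(\begin{smallmatrix}e^{\ii v}&0\\ 0&e^{-\ii v}\end{smallmatrix}\right)$ gives
$$\la J-JA-\tfrac{1}{\la}JH=\begin{pmatrix}\ii w/4 & -\la+\tfrac{1}{16\la}e^{-\ii v}\\ \la-\tfrac{1}{16\la}e^{\ii v} & -\ii w/4\end{pmatrix},$$
which is precisely $B_2$, the value $(-1)^k$ in (\ref{M}) being $+1$ for $k=2$. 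Hence (\ref{mon}) becomes $\partial_x\hat\Psi=[\hat\Psi,B]$, the form of (\ref{def}).

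I expect the substantive point, rather than a genuine obstacle, to be the bookkeeping of arguments: the periodicity must be applied to the \emph{left} argument $l+y$ in the second term of (\ref{DkPsi}), and the identification $A_2(\cdot,t,\la)=B_2(0,\cdot,t,\la)$ must be used so that the sign $(-1)^k$ falls on $k=2$, matching the signs of $B$ in (\ref{B}); everything else is the routine matrix multiplication displayed above. I would also stress that the corollary asserts only that the two \emph{equations} coincide; it does not claim that $\hat\Psi$ is itself the polynomial, zero-trace solution demanded in the definition of a finite-gap potential, which is constructed separately in the subsequent analysis.
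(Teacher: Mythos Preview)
Your proposal is correct and follows precisely the route the paper indicates: the corollary is stated without proof, and the statement itself tells you to derive it from (\ref{B}), (\ref{v=u}), (\ref{DkPsi}), which is exactly what you do --- periodicity of $A_k$ from periodicity of $u$ and its derivatives, specialization of (\ref{DkPsi}) at $x=l$, and the explicit identification $A_2=B$ via the matrix computation. Your remark that the corollary asserts only coincidence of the equations, not that $\hat\Psi$ is itself a polynomial trace-free solution, is also apt.
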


Now we return to the finite-gap potentials. Let $\psi_{ik}$ ($i,k=1,2$) be entries of the matrix $\Psi$. Then matrix equation (\ref{def}) is equivalent to the following system of differential equations:
\begin{align}\label{sysx}
\ii f^\prime=&\left(\la-\frac{e^{\ii u}}{16\la} \right)\psi+\left(\la-\frac{e^{-\ii u}}{16\la} \right)\varphi,\nonumber\\
\ii\psi^\prime=&\frac{w}{2}\psi+2\left(\la-\frac{e^{-\ii u}}{16\la} \right)f,\\
\ii\varphi^\prime=&-\frac{w}{2}\varphi+2\left(\la-\frac{e^{\ii u}}{16\la} \right)f,\nonumber\\
g^\prime=&0,\nonumber
\end{align}
where $2g:=\psi_{11}+\psi_{22}$ and $2\ii f:=\psi_{11}-\psi_{22}$, $\psi:=\psi_{12}$, $\varphi:=\psi_{21}$. Prime $\prime$ denote the differentiation with respect to $x$. It is obviously that any  matrix, which is polynomial in $\la$ and satisfies  equation (\ref{def}), uniquely defines a  polynomial solution of the system (\ref{sysx}) and vice versa.

Following to \cite{Koz} we look for a polynomial solution of the system (\ref{sysx})
in the form:
\be\label{pol}
f=\sum\limits_{k=1}^{N_1} f_k \la^{2k-1}, \quad
\psi=\sum\limits_{k=0}^{N_2} \psi_k \la^{2k}, \quad
\varphi=\sum\limits_{k=0}^{N_3} \varphi_k \la^{2k}.
\ee
The necessary and sufficient conditions for polynomials (\ref{pol}) to be a solution of (\ref{sysx}) are as follows:
\begin{itemize}
\item $N_1=N_2=N_3=N, \qquad N=0,1,2,3,\ldots$;\\
\item $\psi_N=-\varphi_N$ is independent on $x$;\\
\item $e^{\ii u}\psi_0 + \varphi_0e^{-\ii u}=0$;\\
\item $w=-4\D\frac{f_N}{\psi_N}$;\\
\item polynomial coefficients are satisfy the nonlinear autonomous system of differential equations:
\begin{align}
\ii f_m^\prime=&\psi_{m-1} +\varphi_{m-1}-\frac{\psi_{0}\varphi_{m}-\varphi_{0}\psi_{m}}
{16\sqrt{-\psi_{0}\varphi_{0}}}\nonumber\\
\ii\psi^\prime_{m}=&2f_{m}  -\frac{\psi_{0} f_{m+1 }}
{8\sqrt{-\psi_{0}\varphi_{0}}}-\frac{2f_{N}\psi_{m} }{\psi_{N} },\quad m=0,1,\ldots,N \label{sysmx}\\
\ii\varphi^\prime_{m}=&2f_{m} +\frac{\varphi_{0} f_{m+1} }
{8\sqrt{-\psi_{0}\varphi_{0}}}+\frac{2f_{N} \varphi_{m}}{\psi_{N}}.\nonumber
\end{align}
\end{itemize}
Any solution of the system (\ref{sysmx}) generates the matrix polynomial solution  $\Psi$ of (\ref{def}). Such solution satisfies the symmetry property:
\be\label{sym}
\sigma_3\Psi(x,-\la)\sigma_3=-\Psi(x,\la), \qquad \sigma_3=\begin{pmatrix}
                                                    1 & 0\\
                                                    0& -1 \\
                                                  \end{pmatrix}.
\ee
Thus,  necessary and sufficient conditions ( pointed above) give a complete description of all finite-gap potentials, i.e. when equation (\ref{def}) has a polynomial in $\la$ solution of the form (\ref{pol}). It is easy to prove that other finite-gap potentials do not exist. Indeed, let $V(x)$ be an arbitrary,  finite-gap potential. Then by the definition (\ref{def}) there  exists a polynomial in $\la$ solution $\Psi(x,\la)$. In view of the symmetry property
$$
\sigma_3B(x,-\la)\sigma_3=B(x,\la),
$$
the polynomial $\sigma_3\Psi(x,-\la)\sigma_3$ is also a solution of (\ref{def}). Hence the matrix $\Psi(x,\la)-\sigma_3\Psi(x,-\la)\sigma_3$ is a polynomial solution of (\ref{def}) which satisfies the symmetry property (\ref{sym}). The above consideration lead to a more precise definition.
\begin{Def*}\emph{
Matrix  $V(x)$ is called N-gap  potential if equation (\ref{def}) has a polynomial in $\la$ solution of the form (\ref{pol}), where $N_1=N_2=N_3=N$ and, at the same time, there does not exist a polynomial in $\la$ solution of the degree is less than $N$.}
\end{Def*}
The mentioned above necessary and sufficient conditions allow to describe the set of finite-gap potentials $V(x)$ of the operator $L$, and also to find all such potentials by solving autonomous system (\ref{sysmx}).

\section{Autonomous systems and sine-Gordon equation }

In this section we show that the  systems  of equations (\ref{mon}) $k=1,2$  generate some nonlinear autonomous equations, which lead to a solution of the Sine-Gordon equation (\ref{SG}). Let coefficients $u$ and $w$ of equations (\ref{mon}) $k=1,2$ are completely arbitrary, i.e. we will consider the next system of equations:
\begin{align}\label{tmon1}
\ii\dot f =&\left(\la+\frac{e^{\ii u}}{16\la} \right)\psi+\left(\la+\frac{e^{-\ii u}}{16\la} \right)\varphi,\nonumber\\
\ii\dot\psi=&\frac{w}{2}\psi+2\left(\la+\frac{e^{-\ii u}}{16\la} \right)f,\\
\ii\dot\varphi=&-\frac{w}{2}\varphi+2\left(\la+\frac{e^{\ii u}}{16\la} \right)f \nonumber
\end{align}
and
\begin{align}\label{xmon1}
\ii f^\prime=&\left(\la-\frac{e^{\ii u}}{16\la} \right)\psi+\left(\la-\frac{e^{-\ii u}}{16\la} \right)\varphi,\nonumber\\
\ii\psi^\prime=&\frac{w}{2}\psi+2\left(\la-\frac{e^{-\ii u}}{16\la} \right)f,\\
\ii\varphi^\prime=&-\frac{w}{2}\varphi+2\left(\la-\frac{e^{\ii u}}{16\la} \right)f \nonumber
\end{align}
where $u=u(x,t),\, w=w(x,t)$ are some arbitrary complex valued functions, and the dot and prime mean differentiation in $t$ and $x$ respectively (from here and below we use $x$ instead of $y$).

First of all we emphasize that  the  both of system has the conservation law:
\be\label{P}
f^2(x,t,\la)-\psi(x,t,z\la)\varphi(x,t,\la)\equiv P(\la),\qquad \la\in\mathbb{C},
\ee
where $P(\la)$ is defined by initial conditions. If $f, \psi, \varphi$ satisfy
(\ref{tmon1})  and (\ref{xmon1}) simultaneously, then $P(\la)$ is independent on $t$ and $x$. Below we consider the case when  $u,w$ are unknown functions. In this case the systems of equations are underdetermined (not closed). However, we can make them to be closed. To this end, let us seek polynomial in $\la$ solutions of these systems. It is easy to analyze that polynomial solution must be of the form:
\be\label{pol1}
f(x,t,\la)=\sum\limits_{j=1}^{N}f_j(x,t)\la^{2j-1},
\quad
\psi(x,t,\la)=\sum\limits_{j=0}^{N}\psi_j(x,t)\la^{2j}, \quad
\varphi(x,t,\la)=\sum\limits_{j=0}^{N}\varphi_j(x,t)\la^{2j}.
\ee
Let us put $f_0=0$ and $f_j, \psi_l, \varphi_m$ are equal zero when $j, l, m <0$ and $j, l, m >N$.  Then we have algebraic relations:
\begin{align*}
\psi_{N} +\varphi_{N}=&0,\\
e^{\ii u}\psi_0+e^{-\ii u}\varphi_0=&0,\\
4 f_{N}-w\varphi_{N}=&0,\\
4f_N+w\psi_{N}=&0.
\end{align*}
The last relations give unknown functions:
\be\label{uw}
e^{-\ii u}=\frac{\psi_0}{\sqrt{-\psi_{0}\varphi_{0}}},\qquad w=4\frac{f_N}{\varphi_N}=-4\frac{f_N}{\psi_N},
\ee
System (\ref{tmon1}) transforms into the system of equations on polynomial coefficients:
\begin{align}
\ii\dot f_m =&\psi_{m-1} +\varphi_{m-1}+\frac{\psi_{0}\varphi_{m}-\varphi_{0}\psi_{m}}
{16\sqrt{-\psi_{0}\varphi_{0}}}\nonumber\\
\ii\dot\psi_{m}=&2f_{m}  -\frac{\psi_{0} f_{m+1 }}
{8\sqrt{-\psi_{0}\varphi_{0}}}-\frac{2f_{N}\psi_{m} }{\psi_{N} },\quad m=0,1,\ldots,N \label{tmon2}\\
\ii\dot\varphi_{m}=&2f_{m} +\frac{\varphi_{0} f_{m+1} }
{8\sqrt{-\psi_{0}\varphi_{0}}}+\frac{2f_{N} \varphi_{m}}{\psi_{N}}.\nonumber
\end{align}
Moreover, the second and third equations of system (\ref{tmon2}) gives that $\psi_{N}$ and $\varphi_{N}$ is independent on $t$. The conservation law (\ref{P}) give that the quantity $\psi_0\varphi_0=P(0)$ is also constant value.
Substituting $e^{\ii u}, \, e^{-\ii u}, \, w$ into (\ref{tmon1}) we obtain the autonomous system of ODEs:
\be\label{F}
\dot y_j=F_j(y_0, y_1, \ldots, y_M), \qquad j=0,1,\ldots,M, \quad M=3N+2,
\ee
where $y_j=f_j\, (1\le j\le N)$, $y_{N+j}=\psi_j\, (0\le j\le N)$, $y_{2N+1+j}=\varphi_j\, (0\le j\le N)$, and the right hand sides $F_j$ are at most the second degree polynomials on $y_0, y_1, \ldots, y_M$ (since denominators of $F_j$ are constant values).

In polynomial coefficients the system (\ref{xmon1}) is written in the form:
\begin{align}
\ii f_m^\prime=&\psi_{m-1} +\varphi_{m-1}-\frac{\psi_{0}\varphi_{m}-\varphi_{0}\psi_{m}}
{16\sqrt{-\psi_{0}\varphi_{0}}}\nonumber\\
\ii\psi^\prime_{m}=&2f_{m}  -\frac{\psi_{0} f_{m+1 }}
{8\sqrt{-\psi_{0}\varphi_{0}}}-\frac{2f_{N}\psi_{m} }{\psi_{N} },\quad m=0,1,\ldots,N \label{xmon2}\\
\ii\varphi^\prime_{m}=&2f_{m} +\frac{\varphi_{0} f_{m+1} }
{8\sqrt{-\psi_{0}\varphi_{0}}}+\frac{2f_{N} \varphi_{m}}{\psi_{N}}.\nonumber
\end{align}
The function $u$ and $w$ are found by the same formulas (\ref{uw}) as before.  Again we have that $\psi_N=-\varphi_N $ and $\psi_0\varphi_0$ are independent on $x$. After the elimination of $u$ and $w$ these equations are also written in the form of autonomous system of ODEs:
\be\label{Phi}
y_j^\prime=\Phi_j(y_0,y_1, \ldots, y_M), \qquad j=0,1,\ldots,M, \quad M=3N+2,
\ee
where $\Phi_j$ are at most the second degree polynomials on $y_0, y_1, \ldots, y_M$. Let us note that without loss of a generality we can put: $\psi_N=-\varphi_N=1$.

Thus, the requirement of the existence of the polynomial in $z$ solution to systems (\ref{tmon1}) and (\ref{xmon1}) defines uniquely the functions $e^{\ii u}, \, e^{-\ii u}, \, w$ through solutions of autonomous systems (\ref{F}) and (\ref{Phi}). On the other hand, if $f_j$, $\psi_l$ and $\varphi_m$ are a solution of autonomous systems (\ref{F}), (\ref{Phi}), and $e^{\ii u}, \, e^{-\ii u}, \, w$ are the function constructed by (\ref{uw})  then polynomials (\ref{pol1}) are a  solution of systems (\ref{tmon1}), (\ref{xmon1}).
In what follows the system (\ref{tmon2}), (\ref{xmon2}) together with  (\ref{uw}), and the system (\ref{F}), (\ref{Phi}) are identified.

It is well known that necessary and sufficient conditions for the systems
(\ref{F}), (\ref{Phi}) to be compatible are the following ones:
\be\label{Frobenius}
\sum\limits_{j=0}^{N}\left(\frac{\partial F_j}{\partial y_m}\Phi_m-\frac{\partial \Phi_j}{\partial y_m}F_m  \right)=0, \qquad j=0,1,\ldots, M.
\ee
The direct calculation shows that compatibility conditions (\ref{Frobenius}) are fulfilled.
\begin{thm}
Let $f_j(x,t), \psi_l(x,t), \varphi_m(x,t)$ be a compatible local solution of autonomous system (\ref{F}), (\ref{Phi}). Then, functions $u(x,t)$ and $w(x,t)$ defined by (\ref{uw})  are a local infinitely differentiable in $x$ and $t$ solution of the following nonlinear  equations:
$$
 \dot u+u^\prime -w=0,\qquad \dot w- w^\prime+\sin u=0,
$$
\end{thm}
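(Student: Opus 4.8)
The plan is to differentiate the defining relations (\ref{uw}) and feed in the coefficient equations, exploiting the fact that the time flow (\ref{tmon1}) and the space flow (\ref{xmon1}) differ only in the sign of the terms carrying the factor $1/(16\la)$. Set $\rho:=\sqrt{-\psi_0\varphi_0}$; by the conservation law (\ref{P}) and the remarks following (\ref{tmon2}) this $\rho$ is constant in both $x$ and $t$, and I normalize $\psi_N=-\varphi_N=1$. From (\ref{uw}) one has $e^{-\ii u}=\psi_0/\rho$ and hence $e^{\ii u}=-\varphi_0/\rho$, so that $\psi_0=\rho e^{-\ii u}$ and $\varphi_0=-\rho e^{\ii u}$.

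To obtain $\dot u+u'-w=0$ I would differentiate $e^{-\ii u}=\psi_0/\rho$; since $\rho$ is constant this gives $\dot u=\ii\dot\psi_0/\psi_0$ and $u'=\ii\psi_0'/\psi_0$. Reading off the $m=0$ coefficient of the $\psi$-equations in (\ref{tmon1}) and (\ref{xmon1}) and recalling $f_0=0$, the contribution of the term $\pm e^{-\ii u}f/(16\la)$ is a multiple of $\psi_0 f_1/(8\rho)$ whose sign is opposite in the two flows. Adding the two relations therefore cancels this $f_1$ term and leaves $\ii(\dot\psi_0+\psi_0')=-4f_N\psi_0/\psi_N$; dividing by $\psi_0$ and using $w=-4f_N/\psi_N$ from (\ref{uw}) gives exactly $\dot u+u'=w$.

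For $\dot w-w'+\sin u=0$ I would use that $\psi_N$ is constant, so (\ref{uw}) yields $\dot w-w'=-4(\dot f_N-f_N')/\psi_N$. The $m=N$ coefficient of the $f$-equations in (\ref{tmon1}), (\ref{xmon1}) has common part $\psi_{N-1}+\varphi_{N-1}$ together with a term $\pm(\psi_0\varphi_N-\varphi_0\psi_N)/(16\rho)$ of opposite sign in the two flows. Subtracting the two relations cancels the common part and reinforces the mixing term, giving $\ii(\dot f_N-f_N')=(\psi_0\varphi_N-\varphi_0\psi_N)/(8\rho)$. With the normalization this equals $(-\psi_0-\varphi_0)/(8\rho)=(e^{\ii u}-e^{-\ii u})/8=\ii\sin u/4$, so $\dot f_N-f_N'=\tfrac14\sin u$ and hence $\dot w-w'=-\sin u$.

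The smoothness assertion is then immediate: the coefficients solve the autonomous polynomial systems (\ref{F}), (\ref{Phi}), whose right-hand sides are at most quadratic with constant denominators and hence real-analytic, so any local solution is $C^\infty$; the compatibility hypothesis (\ref{Frobenius}) is precisely what guarantees a common local solution $f_j(x,t),\psi_l(x,t),\varphi_m(x,t)$ of both flows. Since $\rho\neq0$ and $\psi_0\varphi_0=-\rho^2$, the function $\psi_0$ never vanishes, so $u=\ii\log(\psi_0/\rho)$ is well defined and smooth locally, and $w=-4f_N/\psi_N$ is smooth because $\psi_N$ is a nonzero constant. I expect the only real difficulty to be bookkeeping: tracking the single sign difference between (\ref{tmon1}) and (\ref{xmon1}) through the coefficient extraction, so that the right combination — the sum of the $\psi_0$-equations and the difference of the $f_N$-equations — annihilates the unwanted coefficients ($f_1$, respectively $\psi_{N-1}+\varphi_{N-1}$) and reproduces precisely $w$ and $-\sin u$.
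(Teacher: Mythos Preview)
Your proposal is correct and follows essentially the same route as the paper: both proofs form exactly the two combinations $\ii(\dot\psi_0+\psi_0')$ and $\ii(\dot f_N-f_N')$ from the coefficient systems (\ref{tmon2}), (\ref{xmon2}) so that the unwanted terms cancel, and then substitute the defining relations (\ref{uw}) to recover $\dot u+u'=w$ and $\dot w-w'=-\sin u$. The smoothness argument is also the same, resting on the polynomial (hence $C^\infty$) right-hand sides of (\ref{F}), (\ref{Phi}).
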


\begin{proof}
Since right hand sides $F_j(y_0,y_1,\ldots,y_M)$ and $\Phi_j(y_0,y_1,\ldots,y_M)$ of systems (\ref{F}), (\ref{Phi}) are infinitely differentiable in $y_0,y_1,\ldots,y_M$ then their solutions are also infinitely differentiable in $t$ and $x$.
Systems (\ref{tmon2}), (\ref{xmon2}) give:
\begin{align}
\ii(\dot f_N-f^\prime_N)=&-\frac{(\psi_0+\varphi_{0})\psi_N}
{8\sqrt{-\psi_{0}\varphi_{0}}},\label{psit}\\
\ii(\dot\psi_0+\psi^\prime_N)=&-4f_N\psi_{0}\label{psix}
\end{align}
Since
$$
f_N=-\frac{w\psi_N}{4},\qquad e^{-\ii u}=\frac{\psi_0}{\sqrt{-\psi_{0}\varphi_{0}}}
$$
then we have
$$
\dot w- w^\prime+\sin u=0, \qquad \dot u+u^\prime -w=0.
$$
As $u$ and $w$ are smooth then last equations are equivalent to the SG equation (\ref{SG}):
$$
  \ddot u-u^{\prime\prime}+\sin u=0.
$$
\end{proof}

Under some conditions systems (\ref{tmon2}), (\ref{xmon2}) have a global solution.
\begin{lem}
Let initial conditions satisfy the symmetry properties:
\be\label{sym1}
f_j=\bar f_j, \qquad \varphi_j=-\bar\psi_j.
\ee
Then system  (\ref{tmon2}), as well as  the system (\ref{xmon2}), has a solution defined for all $t\in\mathbb{R}$ or $x\in\mathbb{R}$ respectively.  The solution is uniformly bounded and possesses the same property (\ref{sym1}) for all $t$ and $x$.
\end{lem}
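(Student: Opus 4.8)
The plan is to prove three things in turn: that the symmetry (\ref{sym1}) is preserved by each flow, that this symmetry turns the conserved quantity of (\ref{P}) into a positive-definite expression on the real axis, and that the resulting a priori bound forces global existence. To begin with the invariance, consider the involution $\iota$ sending $(f_j,\psi_j,\varphi_j)$ to $(\bar f_j,\,-\bar\varphi_j,\,-\bar\psi_j)$; its fixed points are exactly the data satisfying (\ref{sym1}). Under (\ref{sym1}) the product $\psi_0\varphi_0=P(0)$, which is constant along the flow (as noted after (\ref{tmon2})), equals $-|\psi_0|^2$, a negative real constant, so $R:=\sqrt{-\psi_0\varphi_0}=|\psi_0|$ is a fixed positive real number and complex conjugation commutes with the radical. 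I would then verify by a direct conjugation of the right-hand sides of (\ref{tmon2}) that $\iota$ carries solutions to solutions: conjugating the first equation and using $\bar\psi_j=-\varphi_j$, $\bar\varphi_j=-\psi_j$ reproduces the first equation, while conjugating the third reproduces the second (and vice versa). Since the symmetric initial data is a fixed point of $\iota$ and the polynomial right-hand sides are locally Lipschitz, uniqueness gives $Y(t)=\iota(Y(t))$ on the whole interval of existence; hence (\ref{sym1}) persists for all $t$, and the same argument applied to (\ref{xmon2}) handles the $x$-flow. With the normalization $\psi_N=-\varphi_N=1$ and $R$ constant, every denominator in (\ref{F}), (\ref{Phi}) is constant, so both vector fields are genuine polynomials with no singularities.

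Next I would exploit the conservation law (\ref{P}), which the computation indicated after it shows to be constant in $t$ along (\ref{tmon2}) and in $x$ along (\ref{xmon2}). The decisive observation is that, for real $\la$, the symmetry (\ref{sym1}) yields $f(\la)\in\mathbb{R}$ and $\varphi(\la)=-\overline{\psi(\la)}$, so that
$$
P(\la)=f(\la)^2-\psi(\la)\varphi(\la)=f(\la)^2+|\psi(\la)|^2\ge 0 .
$$
Consequently $|f(\la)|\le\sqrt{P(\la)}$ and $|\psi(\la)|\le\sqrt{P(\la)}$ hold for every real $\la$, uniformly in $t$ (respectively $x$), the right-hand side being the finite value determined once and for all by the initial data.

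Finally I would convert these pointwise estimates into uniform bounds on the coefficients. Since $f(\la)$ and $\psi(\la)$ are polynomials of degree at most $2N$, their coefficients are recovered from the values at a fixed finite set of real nodes by Lagrange interpolation, whose weights depend only on the nodes; hence $|f_j|$ and $|\psi_j|$, together with $|\varphi_j|=|\psi_j|$ from (\ref{sym1}), are bounded by constants independent of $t$ (respectively $x$). The trajectory therefore remains in a bounded set, and since the polynomial vector field is smooth everywhere (all denominators being constant), the continuation theorem excludes blow-up and produces a solution on all of $\mathbb{R}$, uniformly bounded and obeying (\ref{sym1}). I expect the crux to be the positive-definiteness step: without the symmetry the indefinite quantity $f^2-\psi\varphi$ gives no control, so the whole argument hinges on recognizing that (\ref{sym1}) makes $P(\la)$ a sum of squares on the real axis.
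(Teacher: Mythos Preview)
Your proof is correct and follows essentially the same line as the paper's: preserve the symmetry by uniqueness for ODEs, observe that (\ref{sym1}) turns the conserved quantity $P(\la)=f^2-\psi\varphi$ into the sum of squares $f(\la)^2+|\psi(\la)|^2$ on the real axis, and then extract uniform coefficient bounds to conclude global existence. The only technical difference is in that last extraction step: the paper factors $P(\la)=P^+(\la)\,\overline{P^+(\bar\la)}$ with $P^+$ of degree $2N$ having all zeros in the closed upper half-plane, and then invokes a Bernstein-type inequality from Levin's book, $|\partial_\la^m f(\la,t)|\le |(P^+)^{(m)}(\la)|$ on $\mathbb{R}$, evaluated at $\la=0$ to get $|f_m(t)|\le|P^+_m|$ directly, whereas your Lagrange-interpolation argument is more elementary and equally valid.
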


\begin{proof}
The proof will be done for system (\ref{tmon2}).  We first show that the symmetry property is conserved for all $t$. Indeed, let $f_j(t)$, $\psi_l(t)$, $\varphi_m(t)$ be the solution of (\ref{tmon2}). Consider a new functions:
$\hat f_j(t)=\bar f_j(t), \, \hat\psi_l(t)=-\bar\varphi_l(t), \, \hat\varphi_m(t)=-\bar\psi_m(t)$. After complex conjugation of system (\ref{tmon2}) it is easy to see that $\hat f_j(t), \, \hat\psi_l(t), \, \hat\varphi_m(t)$ satisfy (\ref{tmon2}). Under condition $t=t_0$, in view of (\ref{sym1}), these two sets of functions are coincide. Then the uniqueness theorem for ODEs gives that they are coincide for all $t$. Further, the existence theorem for ODEs gives the local solution of  (\ref{tmon2}), which inherits the symmetry properties (\ref{sym1}). In turn these properties lead to the uniform boundedness of the solution and, hence, the solution has an extension for all $t\in\mathbb{R}$.

The uniform boundedness  is a consequence   of the conservation law (\ref{P}):
\be\label{P1}
P(\la)\equiv f^2(\la,t)-\psi(\la,t)\varphi(\la,t),
\ee
where polynomials $f, \psi, \varphi$ are constructed by $f_j(t)$, $\psi_l(t)$, $\varphi_m(t)$. In view of (\ref{sym1}), $f(\la,t)=\overline f(\bar\la,t)$ and $\varphi(\la,t)=-\overline\psi(\bar\la,t)$. Hence $P(\la)=\overline P(\bar\la)$. Therefore the polynomial $P(\la)$ can be factorized:
$$
P(\la)=P^+(\la)P^-(\la),
$$
where $P^+(\la)$ is a polynomial of $2N$-th degree. Its zeroes are located in the closed upper half plane. The second polynomial is the complex conjugated to the first one: $P^-(\la)=\overline P^+(\bar\la)$. For real $\la$ conservation law  (\ref{P1}) takes the form
$$
f^2(\la,t)+|\psi(\la,t)|^2\equiv|P^+(\la)|^2.
$$
Hence
$$
|f(\la,t)|\le |P^+(\la)|, \qquad |\psi(\la,t)|=|\varphi(\la,t)|\le |P^+(\la)|,\qquad \la\in\mathbb{R}.
$$
By using  Bernstein inequalities \cite{Levin} we find
$$
\left\vert\frac{\partial^m f(\la,t)}{\partial \la^m}\right\vert\le
\left\vert\frac{d^m P^+(\la)}{d\la^m}\right\vert, \quad m=1,2,\ldots, \quad z\in\mathbb{R}.
$$
Choosing $m=0,1,2,\ldots, N$ and $z=0$ we obtain
$$
|f_m(t)|\le|P^+_m|, \qquad |\psi_m(t)|=|\varphi_m(t)|\le |P^+_m|,
$$
where $P^+_m$ $m=1,2,\ldots,N$ are coefficients of the polynomial $P^+(\la)$, which uniquely defined by initial conditions of the corresponding autonomous system.
The analogous proof occurs for the system (\ref{xmon2}).
\end{proof}
\begin{cor}
If initial conditions for the systems (\ref{tmon2}) and (\ref{xmon2}) possess properties (\ref{sym1}) then there exists  the compatible, global, infinitely differentiable and bounded solution $\hat f_j(t)=\bar f_j(t), \, \hat\psi_l(t)=-\bar\hat\varphi_l(t), \, \hat\varphi_m(t)=-\bar\psi_m(t)$.
Moreover, the function $$u(x,t)=-\arg \psi_0(x,t)$$ is a real global ($x,t\in\mathbb{R}$) and smooth solution of the Sine-Gordon nonlinear equation (\ref{SG}).
\end{cor}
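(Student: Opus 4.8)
The plan is to assemble the desired solution by gluing together the two one-parameter flows produced by Lemma~3.1 into a single solution depending on both $x$ and $t$, and then to read off the reality of $u$ directly from the symmetry~(\ref{sym1}).

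First I would fix initial data $f_j^0,\psi_l^0,\varphi_m^0$ at $(x,t)=(0,0)$ satisfying~(\ref{sym1}), normalised by $\psi_N=-\varphi_N=1$. By Lemma~3.1 the $t$-system~(\ref{tmon2}) has a global solution $t\mapsto Y(0,t)$ that is bounded and obeys~(\ref{sym1}) for every $t$. Since~(\ref{sym1}) holds at each point $Y(0,t)$, Lemma~3.1 applied to the $x$-system~(\ref{xmon2}) yields, for each fixed $t$, a global solution $x\mapsto Y(x,t)$ that is again bounded and symmetric. The place where the two flows must be reconciled is the compatibility of~(\ref{F}) and~(\ref{Phi}): because the Frobenius conditions~(\ref{Frobenius}) hold, the vector fields of~(\ref{F}) and~(\ref{Phi}) commute, and since Lemma~3.1 shows both flows are complete on the symmetric locus, the composition $Y(x,t)$ of the $t$-flow of~(\ref{F}) followed by the $x$-flow of~(\ref{Phi}) is a single-valued, globally defined map $\mathbb{R}^2\to\mathbb{C}^{M+1}$ solving~(\ref{F}) and~(\ref{Phi}) simultaneously. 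Infinite differentiability is automatic because the right-hand sides $F_j,\Phi_j$ are polynomials, and the uniform bound is exactly the estimate $|f_m|,|\psi_m|,|\varphi_m|\le|P^+_m|$ from Lemma~3.1. This proves the first assertion.

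Next I would invoke Theorem~3.1: as $Y(x,t)$ is a compatible local solution of~(\ref{F}),~(\ref{Phi}) at every point, the functions $u,w$ defined by~(\ref{uw}) solve $\dot u+u'-w=0$ and $\dot w-w'+\sin u=0$ on all of $\mathbb{R}^2$, hence $u$ solves~(\ref{SG}). It remains to see that $u=-\arg\psi_0$ is real and smooth, and here the symmetry does the work. From $\varphi_0=-\bar\psi_0$ we get $-\psi_0\varphi_0=|\psi_0|^2$, so $\sqrt{-\psi_0\varphi_0}=|\psi_0|$ and~(\ref{uw}) becomes $e^{-\ii u}=\psi_0/|\psi_0|$, a unimodular number. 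Thus $|e^{-\ii u}|=1$ forces $\Im u=0$, and writing $\psi_0=|\psi_0|e^{\ii\arg\psi_0}$ one reads off $u=-\arg\psi_0$.

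Finally, smoothness and global definiteness of $u$ require $\psi_0\neq0$. By the conservation law~(\ref{P}) the product $\psi_0\varphi_0$ is constant in $x$ and $t$, so by~(\ref{sym1}) the quantity $|\psi_0|^2=-\psi_0\varphi_0$ is a nonnegative constant; as long as it is nonzero, $\psi_0$ traces a circle of fixed positive radius $\sqrt{-\psi_0\varphi_0}$, never vanishes, and a continuous branch of $u=-\arg\psi_0$ is a smooth real-valued function globally defined on $\mathbb{R}^2$. The main obstacle is the step of merging the two separately-global flows into one genuinely two-parameter solution: this needs both the commutativity supplied by~(\ref{Frobenius}) and the fact, guaranteed by Lemma~3.1, that~(\ref{sym1}) is preserved along each flow, so that the $x$-flow issued from any point $Y(0,t)$ is again complete and the composite map is well defined on all of $\mathbb{R}^2$. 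Once that is in place, the reality of $u$ and the formula $u=-\arg\psi_0$ are only the short computation above.
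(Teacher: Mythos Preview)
Your proposal is correct and follows essentially the same approach as the paper. The paper's own proof is terser: it takes the global compatible solution for granted (as a consequence of the preceding lemma and the already-verified Frobenius conditions) and records only the reality computation, writing $u=\ii\ln\bigl(\psi_0/\sqrt{-\psi_0\varphi_0}\bigr)$, using $-\psi_0\varphi_0=|\psi_0|^2=P(0)$ to reduce this to $-\arg\psi_0$. Your version spells out the gluing of the $t$- and $x$-flows via commutativity and preserved symmetry more carefully than the paper does, but the substance is the same.
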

\begin{proof}
Indeed, due to the symmetry properties,
$$
u(x,t)=\ii\ln\frac{\psi_0(x,t)}{\sqrt{-\psi_{0}\varphi_{0}}}=
\ii\ln\frac{\psi_0(x,t)}{\sqrt{|\psi_{0}|^2}}.
$$
The conservation law (\ref{P1}) give that $|\psi_0(x,t)|^2=-\psi_0(x,t)\varphi_0(x,t)=|\psi_{0}|^2=P(0)$. Hence $|\psi_0(x,t)|=|\psi_0|$ for all $x$ and $t$. Therefore
$$
u(x,t)= \ii\ln\left\vert\frac{\psi_0(x,t)}{|\psi_{0}|}\right\vert-\arg\psi_0(x,t)=-\arg\psi_0(x,t)
$$
is a solution of the Sine-Gordon nonlinear equation (\ref{SG}).
\end{proof}
Thus it is proved that any compatible solution of the autonomous systems (\ref{tmon2}) and (\ref{xmon2}) generates by (\ref{uw}) one-parameter (in $t$) set of the finite-gap potentials $V(x,t)$ of the Lax operator $L$. We call the corresponding function $u(x,t)$ as a finite-gap solution of the Sine-Gordon equation.

\section{Abel map and Jacobi inversion problem}

Since $\psi_N$ and $\varphi_N$ are independent on $x$ an $t$ we put  $\psi_N=-\varphi_N=1$ without loss of generality. Then polynomial $\psi(x,t,\la)$ has the form:
$$
\psi(x,t,\la)=\prod\limits_{j=1}^N(\la-\la_j)(\la+\la_j)
$$
and hence $\psi_0(x,t)=(-1)^N\prod\limits_{j=1}^N\mu_j(x,t), \, \mu_j(x,t):= \la_j^2(x,t)$.  If we put $\la^2=z$ then, leaving  the same notation for polynomial $P$,
$$
P(z)=\sum\limits_{j=0}^{2N} p_k z^{2k},
$$
we find
\be\label{uln}
u(x,t)=\ii\ln\left(\D\frac{(-1)^N\prod\limits_{j=1}^N\mu_j(x,t)}{\sqrt{p_0}}\right).
\ee
Introduce now new functions $\hat\psi$ and $\hat f$ such that
$$
\hat\psi(x,t,z)\vert_{z=\la^2}\equiv\psi(x,t,\la),\qquad
\hat f(x,t,z)\vert_{z=\la^2}\equiv \la f(x,t,\la).
$$
Then the systems (\ref{tmon1}) and (\ref{xmon1}) take the form
\be\label{hatpsi}
\ii D_k\hat\psi=2\left(1-\D\frac{(-1)^{k+N}\prod\limits_{j=1}^N\mu_j(x,t)}{16z\sqrt{p_0}}\right)\hat f+\frac{w}{2}\hat\psi, \qquad k=1,2,
\ee
where $D_1$ and $D_2$ are differentiations with respect to $t$ and $x$. These equations are equivalent to
$$
\ii D_k\ln\hat\psi=
2\left(1-\D\frac{(-1)^{k+N}\prod\limits_{j=1}^N\mu_j(x,t)}{16z\sqrt{p_0}}\right)\D\frac{\hat f}{\hat\psi}+\frac{w}{2}, \qquad k=1,2.
$$
The multiplication on $z-\mu_l$ and passage to the limit, as $z\to\mu_l$, give
$$
D_k\mu_l=2\ii \left(1-\D\frac{(-1)^{k+N}}{16\sqrt{p_0}}\prod\limits_{j\neq l}\mu_j(x,t)\right)\D\frac{\hat f(x,t,\mu_l)}{\prod\limits_{j\neq l}(\mu_l-\mu_j)}
$$
Take into account that $\hat f(x,t,\mu_l)=\sqrt{\mu_lP(\mu_l)}$ we finally obtain
\be\label{Dkmu}
D_k\mu_l=2\ii \left(1-\D\frac{(-1)^{k+N}}{16\sqrt{p_0}}\prod\limits_{j\neq l}\mu_j(x,t)\right)\D\frac{\sqrt{\mu_lP(\mu_l)}}{\prod\limits_{j\neq l}(\mu_l-\mu_j)}, \quad k=1,2;\quad l=1,2,\ldots, N
\ee
We have supposed here that polynomial $\hat\psi$ has simple zeroes. In general, they remain simple for small $x$ and $t$, if inequalities $\mu_l(0,0)\neq\mu_j(0,0)$ took place at the origin. The symmetry conditions
$$
f(x,t,\la)=\overline f(x,t,\bar\la),\qquad
\varphi(x,t,\la)=-\overline\psi(x,t,\bar\la)
$$
mean that $P(z)$ ($z=\la^2$) is positive on the real line and, hence,  zeroes of $P(z)$ form the conjugated pairs of points on the complex plane  and/or the even number of them  lies on the negative half line.

Systems (\ref{Dkmu}) are integrated by the known Abel map \cite{D}. To this end the systems must be considered on the hyper-elliptic Riemann surface $\mathcal{R}$ of the function $\sqrt{zP(z)}$. Initial data $\mu_1(0,0), \mu_2(0,0),\ldots, \mu_N(0,0)$, as points are belonging to $\mathcal{R}$, are defined in such a way that
$$
\sqrt{\mu_j(0,0)P(\mu_j(0,0))}=\sum\limits_{l=1}^N f_l(0,0)\mu^l_j(0,0).
$$

Let $E_j$ ($j=1,2,\ldots,2N$) be zeroes of polynomial $P(z)$. Consider a realization of the Riemann surface $\mathcal{R}$ as two-sheet covering of the complex plane. The sheets are merged along of disjoint intervals  $[E_{2j-1},E_{2j}]$ $j=1,2,\ldots,N$ and the positive ray $[0,\infty)$. We choose the cycle  $a_j$ in such a way that the corresponding curve  lies on the upper sheet of $\mathcal{R}$ and encloses interval $[E_{2j-1},E_{2j}]$. The cycle $b_j$ starts on the upper bank of $[E_{2j-1},E_{2j}]$ goes along the upper sheet to the ray  $[0,\infty)$ and returns along the lower sheet to the starting point. We choose also the fixed orientation of the cycles.

Introduce on $\mathcal{R}$ basis $\{\omega_j(z)\}\vert_{j=1}^N$ of Abelian integrals of the first kind:
$$
\omega_j(z)=\int\limits_\infty^z\frac{h_(s)ds}{\sqrt{sP(s)}}, \qquad
h_j(s)=\sum\limits_{l=0}^{N-1} c_{lj}s^l, \qquad  j=1,2,\ldots,N.
$$
They are normalized by conditions:
$$
\int\limits_{a_j} d\omega_l(z)=\delta_{jl}, \qquad  j,l=1,2,\ldots,N.
$$
Then the Abel map has the form:
\be\label{AM}
\nu_j(x,t)=\sum\limits_{l=1}^{N}\omega_j(\mu_l(x,t)).
\ee
Substitution (\ref{AM}) into equations (\ref{Dkmu}) gives next equations for $\nu_j (j=1,2,\ldots,N)$:
$$
D_k\nu_j(x,t)=2\ii\sum\limits_{i=0}^{N-1}c_{ij}\sum\limits_{n=1}^N
\frac{\mu_n^i}{\prod\limits_{l\neq n}(\mu_n-\mu_l)}
\left(1-\D\frac{(-1)^{k+N}}{16\sqrt{p_0}}\prod\limits_{l\neq n}\mu_j(x,t)\right)\qquad k=1,2.
$$
It is easy to see that expression $\sum\limits_{n=1}^N
\frac{\mu_n^i}{\prod\limits_{l\neq n}(\mu_n-\mu_l)}$ can be written as
$$
\sum\limits_{n=1}^N
\frac{\mu_n^i}{\prod\limits_{l\neq n}(\mu_n-\mu_l)}=
\sum\limits_{n=1}^N \res\limits_{z=\mu_n}
\frac{z^i}{\prod\limits_{l=1}^N(z-\mu_l)}.
$$
Then the theory of residues gives
$$
\sum\limits_{n=1}^N
\frac{\mu_n^i}{\prod\limits_{l\neq n}(\mu_n-\mu_l)}=\begin{cases}
\frac{(-1)^{N+1}}{\prod\limits_{l\neq n}\mu_l}, & i=-1\\
0, & 0\le i<N-1\\ 1,& i=N-1.\end{cases}
$$
and we finally have
$$
D_k\nu_j(x,t)=2\ii\left(c_{N-1j}+\D\frac{(-1)^{k}}{16\sqrt{p_0}}c_{0j} \right).
$$
Hence
$$
\nu_j(x,t)=\alpha_j x+\beta_j t+\nu_j(0,0),
$$
where
$$
\alpha_j=2\ii\left(c_{N-1j}+\D\frac{c_{0j}}{16\sqrt{p_0}} \right),\qquad
\beta_j=2\ii\left(c_{N-1j}-\D\frac{c_{0j}}{16\sqrt{p_0}} \right).
$$

Thus we came to the Jacobi inversion problem: find the functions $\mu_j$ in such a way that they satisfy equation
\be\label{JIP}
\sum\limits_{n=1}^N\omega_j(\mu_n(x,t))=\alpha_j x+\beta_j t +\sum\limits_{n=1}^N\omega_j(\mu_n(0,0))
\ee
as the points of the Riemann surface $\mathcal{R}$, i.e. the equality have to fulfill by modulus of periods of the Abelian integrals.

The application of methods from \cite{Zver} to the inversion problem (\ref{JIP})
gives an explicit formula for the function $\sum\limits_{n=1}^N\ln\mu_n(x,t)$ in terms of the theta-functions. Then, in the agreement with (\ref{uln}), we obtain the representation for the finite-gap solution of the Sine-Gordon equation (\ref{SG}).

Let $B_{ij}$ be the entries of the matrix of the $B$-periods of the basis $\{\omega_j(z)\}$. Let us consider theta function
$$
\vartheta(\vec u)=\sum\limits_{\vec m\in\mathbb{Z}^N}
\exp\left[\pi\ii\sum\limits_{ij=1}^N B_{ij} m_i m_j+2\pi\ii
\sum\limits_{j=1}^N m_ju_j\right],
$$
where the summation is made over all integers $m_1, m_2,\ldots,m_N$.
Denote by $\hat{\mathbb{R}}$ the cut Riemann surface $\mathbb{R}$ over $a$-cycles and introduce the Riemann theta function:
\be\label{RTF}
\Theta (z)=\vartheta(e_1-\omega_1(z), e_2-\omega_2(z), \ldots, e_N-\omega_N(z)),
\ee
where
\be\label{ag}
e_j=\alpha_j x+\beta_j t+\gamma_j,\qquad
\gamma_j=\sum\limits_{n=1}^N\omega_j(\mu_n(0,0))+\frac{1}{2}\sum\limits_{i=1}^N B_{ij}-\frac{j}{2}.
\ee
It is known that function $\Theta(z)$ is regular (analytic) on $\hat{\mathbb{R}}$, and its zeroes solves the Jacobi inversion problem (\ref{JIP}) or, by other words, the zeroes of $\Theta(z)$ coincide with points $\mu_j(x,t)$.

We have supposed that polynomial $\hat\psi(0,0,z)$ has simple zeroes. It means that  all $\mu_j(0,0)$ are different and hence  $\mu_j(x,t)$ are also different for sufficiently small $x$ and $t$. Therefore, among  the   points  $\mu_j(x,t)$,  there are no conjugated pairs. In turn it means that the Riemann theta function  is not trivial, i.e. $\Theta(z)\neq 0$. (We remind that the conjugated pair of points on the Riemann surface $\mathbb{R}$ is such that projections of the points on the complex plane are the same, but the points lie on the different sheets of $\mathbb{R}$).

Now we consider the function $ \ln z$, which is the normalized Abelian integral of the third kind on the Riemann surface $\mathbb{R}$ with poles at the points $0$ and $\infty$ and residues $\pm 2$, since local variables are $\tau_0=\sqrt{z}$ and   $\tau_\infty=1/\sqrt{z}$. Its $b$-periods are equal to $2\pi\ii$ or $-2\pi\ii$ in a dependence of the orientations of  cycles. We cut $\hat{\mathcal{R}}$ along a curve $S$ which connects the points $0$ and $\infty$ in such a way that this curve does not intersect $a$ and $b$-cycles and passes away from the set of points $\{\mu_1(0,0), \mu_2(0,0),\ldots,\mu_N(0,0)\}$. Then in the domain $\hat{\mathcal{R}}\setminus S$  we chose the single valued branch of $\ln z$ by fixing its argument. Let $S^+$ ($S^-$) be the left (right) bank of the cut $S$ when one passes from $0$ to $\infty$. For sufficiently small $\varepsilon>0$ and $1/r>0$
consider the closed contour
$$
\Gamma_{\varepsilon r}=S^+_{\varepsilon r}\cup O_r\cup S^-_{\varepsilon r}\cup O_{\varepsilon},
$$
where $S^\pm_{\varepsilon r}=\{z\in S^\pm: \varepsilon<|z|<r\}$ and
$O_{\varepsilon}=\{z: |z|=\varepsilon\}$, $O_r=\{z:|z|=r\}$.
Then residues theorem gives
$$
\frac{1}{2\pi\ii}\left(\int\limits_{\,\, \partial\hat{\mathcal R}}
\,\ln z\, d\ln \Theta(z)+\int\limits_{\Gamma_{\varepsilon r}} \,\ln z\, d\ln \Theta(z)\right)=\sum\limits_{j=1}^N\ln\mu_j(x,t),
$$
where $\partial\hat{\mathcal R}$ is the boundary of the domain $\hat{\mathcal R}$.
We can evaluate the left hand side using known formulas for jumps of the Abelian integrals of the third kind and boundary relations or the Riemann theta function.

To do so let us put $\sigma_j^2=1$ $j=1,2,\ldots,N$ and $2\pi\ii\sigma_j$ be  $b$-periods of $\ln z$. Then, using well-known formulas for jumps of $\omega_j(z)$ and properties of the theta function, we find
$$
 \int\limits_{\,\, \partial\hat{\mathcal R}}
\,\ln z\, d\ln \Theta(z)= $$$$
=\sum\limits_{j=1}^N \left(\int\limits_{a_j}\,\ln z\, d\ln \Theta(z)-
\int\limits_{a_j}\,(\ln z+2\pi\ii\sigma_j)\, d\ln \Theta(z)+2\pi\ii\int\limits_{a_j}\,(\ln z+2\pi\ii\sigma_j)\, d\omega_j(z)\right),
$$
i.e.
$$
\frac{1}{2\pi\ii}\int\limits_{\,\, \partial\hat{\mathcal R}}
\,\ln z\, d\ln \Theta(z)=\sum\limits_{j=1}^N \int\limits_{a_j}\,\ln z\, d\omega_j(z)+2\pi\ii\sum\limits_{j=1}^N\sigma_jm_j=\sum\limits_{j=1}^N \int\limits_{a_j}\,\ln z\, d\omega_j(z)+ 2\pi\ii k,
$$
where $m_j$ and $k$ are some integer. Taking into account  that the function $\Theta(z)$ is continuous for $z\in S$ and  relation
$$
\ln z\vert_{z\in S^-}-\ln z\vert_{z\in S^+}=4\pi\ii,
$$
we can write
$$
\frac{1}{2\pi\ii}\int\limits_{\,\, \Gamma_{\varepsilon r}}
\,\ln z\, d\ln \Theta(z)=-2 \int\limits_{\,\, S^+_{\varepsilon r}}
 d\ln \Theta(z)+$$$$+\frac{1}{\pi\ii}\int\limits_{|\tau|=\sqrt{\varepsilon}}
\,\ln \tau\, d\ln \Theta_0(\tau)-\frac{1}{\pi\ii}\int\limits_{|\tau|=1/\sqrt{r}}
\,\ln \tau\, d\ln \Theta_\infty(\tau),
$$
where $\Theta_0(\tau)$ and $\Theta_\infty(\tau)$ are elements of the function $\Theta(z)$ in neighborhoods of points $0$ and $\infty$ on $\mathcal{R}$. Since
$\Theta(z)$ is regular everywhere on $\mathcal{R}$ and does not equal to zero at the points $0$ and $\infty$, then the integrals over $|\tau|=\sqrt{\varepsilon}$ and
$|\tau|=1/\sqrt{r}$ tend to zero  as $\varepsilon\to0$ and $r\to0$. Hence
\be\label{310}
\sum\limits_{j=1}^N\ln\mu_j(x,t)=2\ln\frac{\Theta(0)}{\Theta(\infty)}+\sum\limits_{j=1}^N \int\limits_{a_j}\,\ln z\, d\omega_j(z)+ 2k\pi\ii.
\ee
At the infinity we have that all  $\omega_j(z)$ vanish while at the origin $\omega_j(0)= \sigma_j/2$. The last follows from the Riemann relations \cite{Zver}:
$$
2\pi\ii\omega_j(0)=2\pi\ii\int\limits_\infty^0d\omega_(z)=\frac{1}{2}\int\limits_{a_j} d\ln z=\pi\ii\sigma_j.
$$
Since $ \theta (\vec u)$ is 1-periodic for any variable $u_j$ then formulas (\ref{uln}), (\ref{RTF}), (\ref{ag}), (\ref{310}) give the final formula for the finite-gap solution of the Sine-Gordon equation:
\be\label{utheta}
u(x,t)= 2\ii\ln\frac{\theta(\vec\alpha x+\vec\beta t+\vec\gamma+\vec\delta)}{\theta(\vec\alpha x+\vec\beta t+\vec\gamma)} +C+2k\pi,\qquad k\in\mathbb{Z},
\ee
where $\vec\delta=(1/2,1/2,\ldots,1/2)$, and
$$
C=\ii\left(\sum\limits_{j=1}^N \int\limits_{a_j}\,\ln z\, d\omega_j(z)-
\ln(-1)^N\sqrt{p_0} \right).
$$
Later it was shown by A.R.Its \cite{Matv} that $C=0$.
Thus we prove
\begin{thm}
Let pairwise distinct complex numbers $E_j$ ($j=1,2,\ldots,2N$) and $\mu_l^0:=\mu_l(0,0)$ ($l=1,2,\ldots,N$) be given. Let there exists such a polynomial $f(z)$ with real coefficients that
\be\label{real}
\prod\limits_{j=1}^{2N}(z-E_j)-\prod\limits_{l=1}^{N}(z-\mu_l^0)(z-\overline{\mu_l^0})=zf^2(z).
\ee
Then defined by (\ref{utheta}) the function $u(x,t)$ is a real, finite-gap, infinitely differentiable for all  $x$ and $t$ solution of the Sine-Gordon equation (\ref{SG}).
\end{thm}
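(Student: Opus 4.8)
The plan is to run the construction of Sections 3 and 4 in reverse. Rather than starting from a solution of the autonomous systems and arriving at the theta formula, I would take the spectral data $\{E_j,\mu_l^0\}$ as given, recover from them an admissible initial point for the systems (\ref{tmon2}), (\ref{xmon2}), and then verify that hypothesis (\ref{real}) is exactly the reality condition (\ref{sym1}) needed to invoke the earlier existence, reality, and linearization results. Set $P(z)=\prod_{j=1}^{2N}(z-E_j)$. The key observation is that (\ref{real}) is nothing but the conservation law (\ref{P1}) evaluated at $x=t=0$: since $f(\la)=\sum_{j=1}^N f_j\la^{2j-1}$ gives $f^2=z\,(\sum_{j=1}^N f_j z^{j-1})^2$, and under (\ref{sym1}) one has $\psi(0,0,z)=\prod_l(z-\mu_l^0)$ and $\varphi(0,0,z)=-\prod_l(z-\overline{\mu_l^0})$, the identity $P=f^2-\psi\varphi$ becomes precisely $P(z)=z f^2(z)+\prod_{l=1}^N(z-\mu_l^0)(z-\overline{\mu_l^0})$. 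Thus the theorem's polynomial $f$ carries the coefficients $f_j(0,0)$, and the stipulation that $f$ be \emph{real} is equivalent to $f_j=\bar f_j$, while $\varphi_j=-\bar\psi_j$ is built into the conjugate factorization of the second product. In particular $P$ has real coefficients, so the $E_j$ occur in conjugate pairs and/or on the real axis — exactly the branch-point configuration for which the surface $\mathcal R$ of $\sqrt{zP(z)}$ carries the symmetry used in Section 4, yielding $P(z)=\overline{P(\bar z)}$ and the factorization $P=P^+P^-$ of Lemma 3.1.

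With the initial data $\psi_l(0,0)$, $\varphi_l(0,0)=-\overline{\psi_l(0,0)}$, $f_l(0,0)$ so constructed, they obey (\ref{sym1}) by the preceding remark. I would then quote Lemma 3.1 and Corollary 3.1: the systems (\ref{tmon2}), (\ref{xmon2}) possess a compatible, global, bounded, infinitely differentiable solution preserving (\ref{sym1}) for all $x,t$; by Theorem 3.1 and Corollary 3.1 the associated $u(x,t)=-\arg\psi_0(x,t)$ is a real, smooth, global solution of (\ref{SG}), with $|\psi_0|=\sqrt{P(0)}$ constant guaranteeing $u$ is well defined. This already secures existence of a real finite-gap smooth solution; it remains only to identify it with (\ref{utheta}). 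For that I use Section 4 unchanged: the zeros $\mu_l(x,t)$ of $\hat\psi$ satisfy (\ref{Dkmu}), the Abel map (\ref{AM}) linearizes this to (\ref{JIP}), the Jacobi inversion problem is solved by the zeros of the Riemann theta function $\Theta(z)$ in (\ref{RTF}), and formula (\ref{310}) combined with (\ref{uln}) reproduces (\ref{utheta}) (the constant $C$ vanishing by the remark of Its following (\ref{utheta})). The finite-gap property is automatic, since by construction (\ref{def}) admits the polynomial-in-$\la$ solution (\ref{pol1}).

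The main obstacle is global regularity: one must show that the theta quotient in (\ref{utheta}) has no singularities for all real $x,t$, equivalently that $\Theta(z)\neq0$, equivalently that the moving divisor $\{\mu_l(x,t)\}$ never collides into a conjugate pair (where $u=-\arg\psi_0$ would lose smoothness), and that this holds globally rather than merely for small $x,t$. The pairwise distinctness of the $E_j$ yields simple branch points and hence simple, distinct $\mu_l(0,0)$, so the argument of Section 4 applies near the origin; the global statement must be extracted from the reality configuration forced by (\ref{real}). The cleanest route is to combine the uniform bound of Lemma 3.1 — which already gives a global solution with no escape to infinity — with the observation that the real spectral data confine the image of the divisor under (\ref{JIP}) to the real subtorus of the Jacobian on which $\Theta$ does not vanish. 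Making this confinement precise, i.e. verifying that the sign structure of $zf^2(z)$ keeps the divisor on the component of real points avoiding the theta divisor for all $x,t$, is the technical heart; once it is in place, reality, infinite differentiability, and the finite-gap property all follow from the boundedness estimate of Lemma 3.1 together with the non-vanishing of $\Theta$.
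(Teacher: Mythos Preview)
Your proposal is correct and follows essentially the same route as the paper: the theorem is presented there as the summary (``Thus we prove'') of Sections~3--4, and the argument is exactly the chain you describe --- read (\ref{real}) as the conservation law $P=f^2-\psi\varphi$ at $x=t=0$ under the symmetry (\ref{sym1}), invoke Lemma~3.1 and its Corollary for global bounded smooth real existence, then run the Abel map and the residue computation of Section~4 to obtain (\ref{utheta}). One small slip: pairwise distinctness of the $E_j$ gives simple branch points but does \emph{not} force the $\mu_l^0$ to be simple and distinct; that is a separate hypothesis of the theorem (and is what the paper invokes when it writes ``We have supposed that polynomial $\hat\psi(0,0,z)$ has simple zeroes''). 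Your concern about global non-vanishing of $\Theta$ is in fact more scrupulous than the paper itself, which checks it only for small $x,t$ and otherwise relies on the global smooth solution already furnished by Lemma~3.1.
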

\begin{rem}
  In general, formula (\ref{utheta}) is true for complex valued solutions of (\ref{SG}). Polynomial relation (\ref{real}) gives necessary and sufficient conditions for the function (\ref{utheta}) to be real solution of the Sine-Gordon equation.
\end{rem}
\begin{rem}
  It is easy to see that solution (\ref{utheta}) depends in fact  on $\xi=(x+t)/2$ and $\eta=(x-t)/2$. Therefore the function $v(\xi,\eta)=u(\xi+\eta,\xi-\eta)$ is a solution of the  Sine-Gordon equation in light-cone variables:
  $$
  v_{\xi\eta}=\sin v.
  $$
\end{rem}

Now we return to the finite-gap potentials of the $L$-operator. Such a potentials are generated by $\exp(\ii u(x,t)/2)$ and $w(x,t)$. We have obtained the following representations:
$$
e^{\ii u(x.t)/2}=\frac{\theta(\vec\alpha x+\vec\beta t+\vec\gamma)}{\theta(\vec\alpha x+\vec\beta t+\vec\gamma+\vec\delta)}e^{\ii C/2}(-1)^k, \qquad k=1,2;
$$
$$
w(x,t)=2\ii\left(\frac{\partial}{\partial x}+\frac{\partial}{\partial t}\right)
\ln\frac{\theta(\vec\alpha x+\vec\beta t+\vec\gamma+\vec\delta)}{\theta(\vec\alpha x+\vec\beta t+\vec\gamma)},
$$
where $t$   must be viewed as a parameter. The properties of the theta function and the results, obtained above,  show  that $\exp(\ii u(x,t)/2)$ and $w(x,t)$ are bounded and almost periodic functions in $x$ and $t$.

\textbf{Acknowledgment}
The authors are grateful to Professor Marchenko V.A.
for his attention to this work and valuable advices.\\

\end{document}